\def\tr
\newcommand{\typedsym}{{:}}
\newcommand{\recv}{\text{\upshape?}}
\newcommand{\typeannotx}[1]{{\color{teal}\typedsym#1}}
\newcommand{\recvx}[2][]{\recv(#2\IfStrEq{#1}{}{}{\typeannotx{#1}})}
\newcommand{\dict}{\smash{\mathfrak{D}}}
\newcommand{\nterm}[1][\dict]{{{}\mathrel{\centernot{\downarrow}}}}
\newcommand{\ntermsym}[1][\dict]{{\centernot{\downarrow}}}
\newcommand{\inxx}[3][]{#2(#3\ifthenelse{\equal{#1}{}}{}{{{\color{black}\isa#1}}})}
\newcommandtwoopt{\loopx}[3][][]{\loopsym[#1]\ifthenelse{\equal{#2}{}}{}{(#2)}\mskip\medmuskip#3}
\newcommand{\loopsym}[1][]{\smash{\mathbf{loop}}_{#1}}
\newcommandtwoopt{\recur}[2][][]{\mathbf{recur}_{#1}\ifthenelse{\equal{#2}{}}{}{(#2)}}
\newcommand{\isa}{\mspace{1mu}\typedsym\mspace{1mu}}
\newcommand{\bin}[1][]{\mskip\medmuskip{\binsym}\mskip\medmuskip}
\newcommand{\binsym}[1][]{{\oplus}}
\newcommandtwoopt{\rollsym}[2][][]{{\prec_{#1}^{#2}}}
\lstdefinelanguage{scrib}{%
  keywords=[1]{choice,at,or,local,from,to,rec,continue,par,and,end,data,global,protocol,do,role},
  keywordstyle=[1]\color{purple},
  keywords=[2]{Accept,Confirm,Reject,Propose,Int},
  keywordstyle=[2]\color{teal},
  morecomment=[l]{//},
  morecomment=[s]{/*}{*/},
  commentstyle=\color{gray},
  mathescape=true,
}
\lstdefinelanguage{scala}{%
  alsoletter={@},
  keywords=[1]{new, if, then, else, class, return, case, match, import, type, val, trait, while, do, var, def, extends, throw, true, false, throw, private},
  keywordstyle=[1]\color{purple},
  keywords=[2]{Cup, Cap, Consistent, Loop, End, Recur, Map, Fold, Head, Last, Proj, Com, Merg, Tuple, S, T, T1, Tn, Ti, F, Z, X, R, U, P, Q, Login, Password, Auth, Cancel, Role, Type, G, G1, Gn, Gi, D, E, L, L1, L2, S@B, S@S, Delegate, T@S, Merge, Send, Recv, Accept, Reject, Propose, Confirm, SendCallback, RecvCallbacks, Substitute, App, Network, UseOnce, State1, State2, State3, State4, State5, State6, S1, S2, S3, S4, S5, S6, Ok, Quit, Thread, String, Int, Boolean, IntOrBoolean, Convert, Date, LocalSession, GlobalSession, Project, Local, Proj, Global},
  keywordstyle=[2]\color{teal},
  morecomment=[l]{//},
  morecomment=[s]{/*}{*/},
  commentstyle=\color{gray},
  morestring=[b]",
  stringstyle=\color{black},
  mathescape=true,
}
\lstdefinelanguage{dcj}{%
  keywords=[1]{defrole, defthread, defroles, defsession, alt, cat, cat*, par, -->, -->>, ->>, *},
  keywordstyle=[1]\color{purple},
  keywords=[2]{},
  keywordstyle=[2]\color{teal},
  morecomment=[l]{;},
  commentstyle=\color{gray},
  mathescape=true,
  alsoletter={>,-,*}
}
\lstdefinelanguage{clj}{%
  keywords=[1]{thread, chan, >!!, <!!, do, do-alts, partial, deliver, deref, promise, alt!!, alts!!, alts!!-with-liveness-check, fn, link, monitor, session, defn, def, let, if, true?, get, true, loop, recur, /, =, not=, println, nil, swap!, inc, dec, throw, ex-info, atom, nil?, second},
  keywordstyle=[1]\color{teal},
  keywords=[2]{},
  keywordstyle=[2]\color{teal},
  morecomment=[l]{;},
  morecomment=[s]{/*}{*/},
  commentstyle=\color{gray},
  mathescape=true,
  alsoletter={<,>,!,/,=,-,?}
}
\newcommand{\dcjm}{\lstinline[language=dcj,mathescape=true,basicstyle=\ttfamily\small\upshape]}
\newcommand{\clj}{\lstinline[language=clj,mathescape=false,basicstyle=\ttfamily\small\upshape]}
\newcommand{\cljm}{\lstinline[language=clj,mathescape=true,basicstyle=\ttfamily\small\upshape]}
\tikzstyle{state} = [inner sep=.5mm, outer sep=.25mm, circle, fill, line width=0pt]
\tikzstyle{final} = [inner sep=.75mm, outer sep=.25mm, circle, fill=none, draw]
\tikzstyle{trans} = [-stealth, rounded corners=1mm]
\tikzstyle{bitrans} = [stealth-stealth, rounded corners=1mm]
\tikzstyle{label} = [inner sep=0pt, line width=0pt, font=\scriptsize]
\tikzstyle{label-left} = [label, anchor=base east, xshift=-1mm, yshift=-.25ex]
\tikzstyle{label-right} = [label, anchor=base west, xshift=1mm, yshift=-.25ex]
\tikzstyle{object} = [font=\footnotesize]
\tikzstyle{wred} = [double, -{Implies}]
\tikzstyle{wbisim} = [double, decorate, decoration={snake, amplitude=.25mm, segment length=2mm}]
\let\primenosmash\prime
\let\astnosmash\ast
\let\dagnosmash\dag
\let\ddagnosmash\ddag
\let\Snosmash\S
\let\Pnosmash\P
\renewcommand{\prime}{{\smash{\primenosmash}}}
\renewcommand{\ast}{{\smash{\astnosmash}}}
\renewcommand{\dag}{{\smash{\dagnosmash}}}
\renewcommand{\ddag}{{\smash{\ddagnosmash}}}
\renewcommand{\S}{{\smash{\Snosmash}}}
\renewcommand{\P}{{\smash{\Pnosmash}}}
\newcommand{\dfracfrac}[2]{\text{%
  \rlap{\ensuremath{\dfrac{\phantom{#1}}{#2}}}%
  \raisebox{2pt}{\ensuremath{\dfrac{#1}{\phantom{#2}}}}%
}}
\newcommand{\rulexxx}[3]{\dfrac{#2}{#3}\ifthenelse{\equal{#1}{}}{}{\  \text{\upshape(\hypertarget{\detokenize{#1}}{{#1}})}}}
\newcommand{\corulexxx}[3]{\dfracfrac{\begin{gathered}#2\end{gathered}}{\begin{gathered}#3\end{gathered}}\ifthenelse{\equal{#1}{}}{}{\, \text{\upshape\scriptsize[\hypertarget{\detokenize{#1}}{\textsc{#1}}]}}}
\newcommandtwoopt{\BLOCK}[3][{[}][{]}]{%
	\begingroup%
				%
				\left#1\begin{gathered}#3\end{gathered}\right#2%
	\endgroup%
}
\newcommandtwoopt{\BLOCKt}[3][{[}][{]}]{%
	\begingroup%
				%
				\left#1\begin{gathered}[t]#3\end{gathered}\right#2%
	\endgroup%
}
\let\oldproof\proof
\let\oldendproof\endproof
\newcommand{\hideproofs}{%
	\let\proof\hide%
	\let\endproof\endhide}
\newcommand{\showproofs}{%
	\let\proof\oldproof%
	\let\endproof\oldendproof}
  \let\hideproofs\showproofs
  \let\showproofs\hideproofs
\begin{document}

\title{Discourje: Run-Time Verification of Communica-\\tion Protocols in Clojure -- Live at Last\ifdefined\tr\\(Technical Report)\fi}
\titlerunning{Discourje: Live at Last}
\author{Sung-Shik Jongmans}
\institute{Open University of the Netherlands, Heerlen, the Netherlands}
\maketitle

\begin{abstract}
Multiparty session typing (MPST) is a formal method to make concurrent
programming simpler. The idea is to use type checking to automatically prove
safety (protocol compliance) and liveness (communication deadlock freedom) of
implementations relative to specifications. Discourje is an existing run-time
verification library for communication protocols in Clojure, based on dynamic
MPST. The original version of Discourje can detect only safety violations. In
this paper, we present an extension of Discourje to detect also liveness
violations.
\end{abstract}


\section{Introduction}\label{sect:intr}

\subsubsection*{Background.}

With the advent of multicore processors, multithreaded program\-ming---a
notoriously error-prone en\-ter\-prise---has become increasingly important.

Because of this, mainstream languages have started to offer core support for
higher-level {\textit{communication} primitives} besides lower-level
{\textit{synchronisation} primitives} (e.g., Clojure, Go, Kotlin, Rust). The
idea has been to add \textit{message passing} as an abstraction on top of
\textit{shared memory}, for---supposedly---\textit{channels} are easier to use
than \textit{locks}. However, empirical research shows that, actually, ``message passing
does not necessarily make multithreaded programs less error-prone than shared
memory'' \cite{DBLP:conf/asplos/TuLSZ19}.
%
%
One of the core challenges is as follows: given a specification ${S}$ of the
\textit{communication protocols} that an implementation ${I}$ should fulfil, how to
prove that ${I}$ is \textit{safe} and \textit{live} relative to ${S}$? Safety
means that ``bad'' channel actions never happen: \underline{if} a channel action
happens in ${I}$, \underline{then} it is allowed to happen by ${S}$
(protocol compliance). Liveness means that ``good'' channel
actions eventually happen (communication deadlock freedom).

\subsubsection*{Multiparty session typing (MPST).}

MPST~\cite{DBLP:conf/popl/HondaYC08} is a formal method to automatically prove
safety and liveness of implementations relative to specifications. The idea is
to implement communication protocols as \textit{sessions} (of communicating
threads), specify them as \textit{behavioural types}
\cite{DBLP:journals/ftpl/AnconaBB0CDGGGH16,DBLP:journals/csur/HuttelLVCCDMPRT16}, and verify the former against the latter using behavioural type checking. Formally, the central theorem is that well-typed\-ness implies safety and liveness. Over the past fifteen years, much progress has been made, including the development of many tools to combine MPST with mainstream languages (e.g., F\#~\cite{DBLP:conf/cc/NeykovaHYA18}, F$^\star$~\cite{DBLP:journals/pacmpl/00020HNY20}, Go~\cite{DBLP:journals/pacmpl/CastroHJNY19}, Java~\cite{DBLP:conf/fase/HuY16,DBLP:conf/fase/HuY17}, OCaml~\cite{DBLP:conf/ecoop/ImaiNYY19}, Rust~\cite{DBLP:conf/coordination/LagaillardieNY20,DBLP:conf/ecoop/LagaillardieNY22}, Scala~\cite{DBLP:conf/ecoop/ScalasDHY17,DBLP:conf/ecoop/CledouEJP22,DBLP:conf/issta/FerreiraJ23,DBLP:conf/ecoop/BarwellHY023}, and TypeScript~\cite{DBLP:conf/cc/Miu0Y021}).

Behavioural type checking can be done \textit{statically} at compile-time or
\textit{dynamically} at run-time. The disadvantage of static MPST is, it is
conservative: statically checking \textit{each possible run} of a session is
often prohibitively complicated---if computable at all---so sessions are often
unnecessarily rejected. In contrast, the advantage of dynamic MPST is, it is
liberal: dynamically checking \textit{one actual run} of a session is much
simpler, so sessions are never unnecessarily rejected.


\subsubsection*{This work.}

\textit{Discourje} (pronounced ``discourse'')
\cite{DBLP:conf/tacas/HamersJ20,DBLP:conf/sigsoft/HorlingsJ21,DBLP:conf/isola/HamersJ20} is a library that adds dynamic MPST to \textit{Clojure}\footnote{A Lisp that runs on the JVM, with core support for channel-based message passing.}. It has a specification language to write behavioural types (em\-bedded as an internal DSL in Clojure) and a verification engine to dynamically type-check sessions against them. The \textbf{key design goals} have been to achieve high expressiveness (cf. static MPST) and to be particularly mindful of ergonomics (i.e., make Discourje's usage as frictionless as possible).

In a nutshell, at run-time, Discourje's dynamic type checker simulates
behavioural type $S$---as if it were a state machine---alongside session $I$.
Each time when a channel action is about to happen in $I$, the dynamic type
checker intervenes and first verifies if a corresponding transition can happen
in $S$. If so, both the channel action and the transition happen. If not, an
exception is thrown.

However, while safety violations are detected in this way (protocol
incompliance), liveness violations are not (communication deadlocks: threads
cyclically depend on each others' channel actions, and so, they collectively get
stuck). This is a serious limitation relative to static MPST. In this paper, we
present an extension of Discourje to detect also liveness violations. Achieving
this, \textit{without compromising the key design goals}, has been an elusive
problem that for years we did not know how to solve (e.g., we could not reuse
variants of existing techniques for static MPST at run-time, as this would
negatively affect expressiveness).

\autoref{sect:overv} of this paper demonstrates \underline{that} it can be done,
while \autoref{sect:details} outlines \underline{how}. The key idea is to use
``mock'' channels, which mimic ``real'' channels, to track ongoing
communications: before any channel action happens on a real channel, it is first
tried on a corresponding mock channel, allowing us to check if \textit{all}
threads would get stuck in a \textit{total} communication deadlock as a result.


\section{Demonstration}\label{sect:overv}

\begin{figure}[t]
	\textbf{Discourje:}
	\begin{itemize}
		\item \dcjm`(defthread $\mathit{id}$)`/\dcjm`(defsession $\mathit{id}$ [$\mathit{args}$] $\mathit{body}$)` specifies a thread name/protocol.
		
		\item \dcjm`(-->>/--> $t$ $p$ $q$)` specifies an asynchronous\slash
		synchronous communication of a value of data type $t$ through a buffered\slash
		unbuffered channel from $p$ to $q$.
		
		\item \dcjm`(alt ...)` and \dcjm`(cat/par ...)` specify choice and
		sequencing/interleaving.
		
		\item Names of threads and protocols are prefixed by an otherwise meaningless
		colon.
	\end{itemize}
	\textbf{Clojure:}
	\begin{itemize}
		\item \cljm`(thread $\mathit{body}$)`, \cljm`(chan)`, and %
		\cljm`(chan $\mathit{size}$)` implement the creation of new thread, a new
		unbuffered channel, and a new buffered channel.
		
		\item \cljm`(>!! $\mathit{ch}$ $\mathit{expr}$)` implements the send of the
		value of $\mathit{expr}$ through $\mathit{ch}$.
		
		\item \cljm`(<!! $\mathit{ch}$)` implements the receive of a value through
		$\mathit{ch}$.
		
		\item %
		\cljm`(alts!! [$\mathit{act}_1$ ... $\mathit{act}_n$])` implements a
		selection of one of the channel actions, depending on their dis/enabledness
		(cf. \texttt{select} of POSIX sockets and Go channels). If $\mathit{act}_i$ is
		a send, it is a pair \cljm`[$ch$ $v$]`; if it is a receive, it is just $ch$.
		The function returns a pair \cljm`[$v$ $ch$]` where $v$ is the value
		sent\slash received, and $ch$ is the channel.
	\end{itemize}
	
	\caption{Discourje and Clojure in a nutshell}
	\label{fig:dcj+clj}
\end{figure}

We demonstrate the extension to detect liveness violations with two examples.
For reference, \autoref{fig:dcj+clj} summarises the main elements of Discourje
and Clojure.

\begin{figure}[t]
	\begin{minipage}{.5\linewidth-2mm}
		\begin{lstlisting}[language=dcj]
(defthread :buyer1)
(defthread :buyer2)
(defthread :seller)

(defsession :two-buyer []
  (cat
   (-->> String :buyer1 :seller)
   (par
    (cat
     (-->> Double :seller :buyer1)
     (-->> Double :buyer1 :buyer2))
    (-->> Double :seller :buyer2))
   (-->> Boolean :buyer2 :seller)))
		\end{lstlisting}
		\subcaption{Specification in Discourje}
		\label{fig:twobuyer:dcj}
	\end{minipage}%
	\hfill%
	\begin{minipage}{.5\linewidth-2mm}
		\begin{minipage}{.5\linewidth-2mm}
			\begin{lstlisting}[language=clj]
(def c1 (chan 1))
(def c2 (chan 1))
(def c3 (chan 1))
(def c4 (chan 1))
(def c5 (chan 1))
(def c6 (chan 1))

(thread ;; Buyer1
  (>!! c1 "book")
  (let
    [x (<!! c5)
     y (/ x 2)]
    (>!! c2 y)))
			\end{lstlisting}
		\end{minipage}%
		\hfill%
		\begin{minipage}{.5\linewidth-2mm}
			\begin{lstlisting}[language=clj, firstnumber=last]
(thread ;; Buyer2
  (let 
    [x (<!! c6)
     y (<!! c2)
     z (= x y)]
    (>!! c4 z)))

(thread ;; Seller
  (<!! c1)
  (>!! c5 20.00)
  (>!! c6 20.00)
  (println
    (<!! c4)))
			\end{lstlisting}
		\end{minipage}
		\subcaption{Implementation in Clojure}
		\label{fig:twobuyer:clj}
	\end{minipage}
	
\medbreak

	To dynamically type-check the session, the following code creates a
	\textit{monitor} for the session, and \textit{links} it to each channel along
	with the intended sender and receiver:

\medbreak

	\begin{lstlisting}[language=clj, numbers=none]
(def m (monitor :two-buyer :n 3))
(link c1 :buyer1 :buyer2 m) (link c2 :buyer1 :seller m)
(link c3 :buyer2 :buyer1 m) (link c4 :buyer2 :seller m)
(link c5 :seller :buyer1 m) (link c6 :seller :buyer2 m)
	\end{lstlisting}
	
	\caption{Two-Buyer (\autoref{exmp:twobuyer})}
	\label{fig:twobuyer}
\end{figure}

\begin{example}\label{exmp:twobuyer}
	The \textit{Two-Buyer} protocol consists of \textit{Buyer1}, \textit{Buyer2},
	and \textit{Seller} \cite{DBLP:conf/popl/HondaYC08}: ``Buyer1 and Buyer2 wish
	to buy an expensive book from Seller by combining their money. Buyer1 sends the
	title of the book to Seller, Seller sends to both Buyer1 and Buyer2 its quote,
	Buyer1 tells Buyer2 how much she can pay, and Buyer2 either accepts the quote
	or rejects the quote by notifying Seller.''
	
	\autoref{fig:twobuyer} shows a behavioural type and a session. It is safe and
	live. In contrast, if we had accidentally written \cljm`(<!! c3)` on line 11
	(i.e., Buyer1 tries to receive from Buyer2 instead of Seller), then it
	deadlocks. The original Discourje does not detect this liveness violation, but
	with the extension, an exception is thrown. \qed
\end{example}

\begin{figure}[t]
	\begin{minipage}{.4\linewidth-2mm}
		\begin{lstlisting}[language=dcj]
(defthread :c) (defthread :s1)
(defthread :b) (defthread :s2)

(defsession :load-balancer []
  (cat 
   (-->> Long :c :b)
   (alt
    (cat
     (-->> Long :b  :s1)
     (-->  Long :s1 :c))
    (cat
     (-->> Long :b  :s2)
     (-->  Long :s2 :c)))))
		\end{lstlisting}
		\subcaption{Specification in Discourje}
		\label{fig:lb:dcj}
	\end{minipage}%
	\hfill%
	\begin{minipage}{.6\linewidth-2mm}
		\begin{minipage}{.5\linewidth-2mm}
			\begin{lstlisting}[language=clj]
(def c1 (chan))
(def c2 (chan))
(def c3 (chan))
(def c4 (chan 512))
(def c5 (chan 1024))

;; Load Balancer
(thread
  (let [x (<!! c1)]
    (alts!!
      [[c4 x]
       [c5 x]])))
$ $
			\end{lstlisting}
		\end{minipage}%
		\hfill%
		\begin{minipage}{.5\linewidth-2mm}
			\begin{lstlisting}[language=clj, firstnumber=last]
(thread ;; Client
  (>!! c1 5)
  (alts!! [c2 c3])))

(thread ;; Server1
  (let [x (<!! c2)
        y (inc x)]
    (>!! c2 y))))

(thread ;; Server2
  (let [x (<!! c3)
        y (inc x)]
    (>!! c3 y))))
			\end{lstlisting}
		\end{minipage}
		\subcaption{Implementation in Clojure}
		\label{fig:lb:clj}
	\end{minipage}
	
\medbreak
	
	To dynamically type-check the session:
	
\medbreak
	
	\begin{lstlisting}[language=clj, numbers=none]
(def m (monitor :load-balancer :n 4)) (link c4 :b :s1 m) (link c2 :s1 :c m)
                    (link c1 :c :b m) (link c5 :b :s2 m) (link c3 :s2 :c m)
	\end{lstlisting}
	
	\caption{Load Balancing (\autoref{exmp:lb}}
	\label{fig:lb}
\end{figure}

\begin{example}\label{exmp:lb}
	The \textit{Load Balancing} protocol consists of \textit{Client},
	\textit{Server1}, \textit{Server2}, and \textit{LoadBalancer}. First, a request
	is communicated synchronously from Client to LoadBalancer, and asynchronously
	from LoadBalancer to  Server1 or Server2. Next, the response is communicated
	synchronously from that server to Client.
	
	\autoref{fig:lb} shows a behavioural type and a session. It is safe but not
	live. There are two deadlocks. The first one occurs because Server1 and Server2
	try to receive from \clj`c2` and \clj`c3` on lines 19 and 23; this should be
	\clj`c4` and \clj`c5`. The second deadlock occurs because one of the servers
	will never receive a value and, as a result, block the entire program from
	terminating. The original Discourje does not detect these liveness violations,
	but with the extension, exceptions are thrown. \qed
\end{example}


\section{Technical Details}
\label{sect:details}

\subsubsection*{Requirements.}

In this section, we outline how the extension to detect liveness violations
works, focussing on the core deadlock detection algorithm. We begin by stating
the rather complicated requirements for this algorithm, as entailed by
Discourje's key design goals regarding expressiveness and ergonomics
(\autoref{sect:intr}):

\begin{itemize}
	\item \textbf{Expressiveness:} The algorithm must be applicable to any
	combination of buffered and unbuffered channels, and to all functions \clj`>!!`
	(send), \clj`<!!` (receive), and \clj`alts!!` (select). Thus, the programmer
	can continue to freely mix synchronous and asynchronous sends/receives, possibly
	selected dynamically.
	
	\item \textbf{Ergonomics:} The algorithm must call only into the public API of
	Clojure's standard libraries, without modifying the internals, and without
	relying on JVM interoperability. Thus, the programmer can write portable code
	that runs on different versions of Clojure and on different architectures.
\end{itemize}

\noindent The combination of these requirements has made the design of the
algorithm elusive. For instance, the expressiveness requirement means that we
cannot simply reuse existing distributed algorithms for deadlock detection
(e.g.,
\cite{DBLP:journals/dpd/SrinivasanR11,DBLP:conf/sc/HilbrichSNPBM13,DBLP:journals/dc/BrachaT87,DBLP:journals/vldb/KrivokapicKG99}), as they typically do not support mixing of synchrony and asynchrony. The ergonomics requirement means that we cannot instrument Clojure's internal code to manage threads, nor can we use Java's thread monitoring facilities.

\subsubsection*{Terminology.}

A \textit{channel action} is either a \textit{send} of $v$ through $ch$,
represented as \cljm`[$ch$ $v$]`, or a \textit{receive} through channel $ch$,
represented as just $ch$ (cf. \clj`alts!!` in \autoref{fig:dcj+clj}). A channel
action is \textit{pending} if it has been initiated but not yet completed. A pending
channel action is either \textit{enabled} or \textit{disabled}, depending on
$ch$:
\begin{itemize}
	\item when $ch$ is a buffered channel, a pending send is enabled iff $ch$ is
	non-full, while a pending receive is enabled iff $ch$ is non-empty;
	
	\item when $ch$ is an unbuffered channel, a pending send is enabled iff a
	corresponding receive is pending, and vice versa.
\end{itemize}
When a thread initiates channel actions, but they are disabled, it is
\textit{suspended}. When a disabled channel action becomes enabled, the suspended
thread is \textit{resumed}. A \textit{communication deadlock} is a situation where
each thread is suspended.

\subsubsection*{Setting the stage.}

Normally, channel actions are initiated via functions \clj`>!!`, \clj`<!!`, and
\clj`alts!!`. When these functions are called using the extension, the dynamic
type checker intervenes and first calls %
\cljm`(detect-deadlocks [$\mathit{act}_1$ ... $\mathit{act}_n$])` to initiate
corresponding ``mock'' channel actions on ``mock'' channels. Each mock channel
mimics a ``real'' channel and is used only by the dynamic type checker.

The mock channels have the same un/buffered properties and contents as the real
channels, except that values are replaced with tokens. So, if
\cljm`detect-deadlocks` detects a deadlock on the mock channels, then a deadlock
will occur on the real channels, too. (Mock channels are also essential to
detect safety violations.)

To initiate the mock channel actions, a separate function in the public API of
Clojure's standard libraries is used: %
\cljm`(do-alts $\mathit{f}$ $\mathit{acts}$ $\mathit{config}$)`. It resembles
\clj`alts!!`, except that it never suspends the calling thread. Instead, a call
of \clj`do-alts` immediately returns and, asynchronously, initiates the channel
actions in $\mathit{acts}$ and calls $\mathit{f}$ when one is completed. In this
way, initiation of mock channel actions can be decoupled from suspension of
threads (demonstrated below).

\subsubsection*{Algorithm.}

Let \clj`n` be the number of threads. The idea to detect deadlocks is to
identify the situation when \clj`n-1` threads are already suspended, while the
\textbf{last thread} is \textbf{about to be suspended}. In that situation,
instead of suspending the last thread, an exception is thrown to flag the
liveness violation. In code:

\medbreak\begin{lstlisting}[language=clj]
(defn detect-deadlocks [mock-acts] ;; $\color{gray}\mathit{act}_1$ ... $\color{gray}\mathit{act}_n$
  (let [ret (about-to-be-suspended? mock-acts)]
    (if (true? ret)
      (let [ret (last-thread? mock-acts)]
        (if (true? ret) (throw (ex-info "deadlock!" {})) ret)) ret)))
\end{lstlisting}\medbreak

Function \clj`about-to-be-suspended?` checks if any of the \clj`mock-acts` is
enabled. If so, it immediately initiates and completes it, and returns the
result (of the form \cljm`[$v$ $ch$]`). If not, the function returns \clj`true` to
indicate that the current thread would indeed be suspended if \clj`mock-acts`
were to be initiated. In code:

\medbreak\begin{lstlisting}[language=clj, firstnumber=6]
(defn about-to-be-suspended? [mock-acts]
  (let [ret @(do-alts (fn [_] nil) mock-acts {:default nil})]
    (if (not= ret [nil :default]) ret true)))
\end{lstlisting}\medbreak

\noindent On line 7, optional parameter \clj`{:default nil}` configures
\clj`alts!!` such that it immediately returns \clj`[nil :default]` when all
\clj`mock-acts` are disabled.

Function \clj`last-thread?` increments the number of suspended threads and
checks if the number is less than \clj`n`. If so, it initiates \clj`mock-acts`,
and actually suspends the current thread. If not, the function returns
\clj`true` to indicate that the current thread is indeed the last one, so a
deadlock is detected. In code:
\medbreak\begin{lstlisting}[language=clj, firstnumber=9]
(def i (atom 0)) ;; number of suspended threads (private to the algorithm)

(defn last-thread? [mock-acts]
  (if (< (swap! i inc) n)  ;; increment `i` (`swap!` returns the new value)
    (let [p (promise)]     ;; create promise to store result of `mock-acts`
      (do-alts (fn [x] (deliver p x)) mock-acts {}) ;; initiate `mock-acts`,
                           ;; and store result `x` of one of them in `p`
                           ;; upon completion, all asynchronously
      (let [ret (deref p)] ;; suspend thread (`deref` blocks until `deliver`)
        (swap! i dec)      ;; decrement `i`
        ret)
    true))
\end{lstlisting}\medbreak
The code shown so far explains the general idea behind the algorithm. However,
the details are more involved: our presentation does not yet account for data
races, several of which are possible. For instance, suppose that there are two
threads (Alice and Bob), that they initiate corresponding channel actions (no
deadlock), and that calls of \clj`detect-deadlocks` are scheduled as follows:
\begin{quote}
	\textbf{(1)} Alice executes \clj`about-to-be-suspended?`. It returns \clj`true`.
	\textbf{(2)} Bob executes \clj`about-to-be-suspended?`. It, again, returns
	\clj`true`, as Alice has not yet executed \clj`last-thread?`. \textbf{(3)} Bob
	executes \clj`last-thread?`. It increments \clj`n` to \clj`1` and suspends Bob.
	\textbf{(4)} Alice executes \clj`last-thread?`. It increments \clj`n` to
	\clj`2`, detects that Alice is last, and immediately returns \clj`nil`.
\end{quote}

\noindent At this point, mistakenly, an exception is thrown. There are more
subtle data races, too. The core issue is that \clj`about-to-be-suspended?` and
\clj`last-thread?` should be run \textit{atomically} to avoid problematic
schedules (e.g., the one above). Details appear in
\ifdefined\tr{\autoref{sect:algo}}\else{the technical report \cite[Sect.
A]{tr}}\fi. The actual source code was validated using both unit tests and
whole-program tests.


\section{Conclusion}

Closest to the work in this paper is existing work on dynamic MPST
\cite{DBLP:journals/fac/NeykovaBY17,DBLP:journals/tcs/BocchiCDHY17,DBLP:conf/rv/HeuvelPD23,DBLP:conf/cc/NeykovaY17,DBLP:conf/cc/NeykovaHYA18} and alternate forms of dynamic behavioural typing \cite{DBLP:conf/ecoop/BurloFS21,DBLP:journals/pacmpl/MelgrattiP17,DBLP:journals/jlap/GommerstadtJP22,DBLP:journals/scp/BurloFSTT22}. However, none of these tools can check for liveness at run-time. Also closely related is existing work on dynamic deadlock detection in distributed systems (e.g., \cite{DBLP:journals/dpd/SrinivasanR11,DBLP:conf/sc/HilbrichSNPBM13,DBLP:journals/dc/BrachaT87,DBLP:journals/vldb/KrivokapicKG99}). However, as stated in \autoref{sect:details}, these algorithms do not fit our requirements. Finally, we are aware of only two other works that use formal techniques to reason about Clojure programs: the formalisation of an optional type system for Clojure  \cite{DBLP:conf/esop/Bonnaire-Sergeant16}, and a translation from Clojure to Boogie \cite{DBLP:conf/fmco/BarnettCDJL05,DBLP:conf/tools/PinzaruR19}.

As next steps, we aim to extend Discourje with support for Clojure's built-in
publish--subscribe mechanism, as well as its channel composition facilities. We
also aim to quantitatively measure, compare, and optimise the performance
overhead of our deadlock detection algorithm.

\clearpage
\begin{credits}
\subsubsection*{\discintname}
The author has no competing interests to declare that are relevant to the
content of this article.
\end{credits}

\bibliographystyle{splncs04}
\bibliography{fm2024}

\ifdefined\tr

\clearpage
\appendix

\section{Details of the Algorithm: Synchronisation}
\label{sect:algo}

\begin{figure}[p]
	\begin{lstlisting}[language=clj]
;; numbers of threads
(def n ...)      ;; total
(def i (atom 0)) ;; suspended

;; channel-based lock
(def semaphore (chan 1))
(def permit "")
(>!! semaphore permit)

;; channel-based barriers
;; (def barriers (atom {})) $\label{shadowbegin1}$
;; (defn install-barrier [barriers mock-acts]
;;   (let [barrier (a/chan)
;;         chs (mapv #(if (vector? %) (first %) %) mock-acts)]
;;     (swap! barriers conj [barrier chs])
;;     barrier))
;; (defn uninstall-barrier [barriers ch]
;;   (loop []
;;     (let [pred (fn [[_ chs]] (some #{ch} chs))
;;           old @barriers
;;           new (filterv (complement pred) old)]
;;       (if (compare-and-set! barriers old new)
;;         (first (first (filterv pred old)))
;;         (recur)))))$\label{shadowend1}$

(defn detect-deadlocks [mock-acts]
  (<!! semaphore)
  ;; --- BEGIN CRITICAL SECTION --- $\label{csbegin}$
  
  ;; `about-to-be-suspended?`
  (let [[v ch] @(do-alts (fn [_] nil) mock-acts {:default nil})] $\label{alts1}$
    (if (= ch :default)
    
      ;; `last-thread?`
      (if (< (swap! i inc) n)
        (let [;; barrier (install-barrier mock-acts) $\label{shadow}$
              p (promise)]
          (do-alts (partial deliver p) mock-acts {}) $\label{alts2}$
          
          ;; --- END CRITICAL SECTION --- $\label{csend1}$
          (>!! semaphore permit)
          
          (let [[v ch] (deref p)]
            (swap! i dec)
            ;; (a/>!! barrier "") $\label{shadowbegin2}$
             
            ;; Return
            [v ch]))
        (do
        
          ;; --- END CRITICAL SECTION --- $\label{csend2}$
          (>!! semaphore permit)
          
          ;; Throw
          (throw (ex-info "deadlock" {}))))
      (do
        ;; (if-let [barrier (uninstall-barrier barriers ch)] $\label{shadowbegin3}$
        ;;   (a/<!! barrier)) $\label{shadowend3}$
        
        ;; --- END CRITICAL SECTION --- $\label{csend3}$
        (>!! semaphore permit)
        
        ;; Return
        [v ch]))))
	\end{lstlisting}
	
	\caption{Full algorithm}
	\label{fig:algo}
\end{figure}

\autoref{fig:algo} shows our full deadlock detection algorithm,\footnote{The
actual source code is even more complex, as it also needs to account for closing
of channels and termination of threads; we skip those complications here.} the
core of which was presented in \autoref{sect:details}. We inlined the bodies of
functions \clj`about-to-be-suspended?` and \clj`last-thread?` to make the
presentation below simpler. For now, we ignore the commented lines
\ref{shadowbegin1}--\ref{shadowend1}, line \ref{shadow}, line
\ref{shadowbegin2}, and lines \ref{shadowbegin3}--\ref{shadowend3}; we will
explain their purpose later in this section.

Relative to \autoref{sect:details}, the main change is that \autoref{fig:algo}
also contains additional synchronisation to protect the \textit{critical
section} of the algorithm (i.e., execute the bodies of
\clj`about-to-be-suspended?` and \clj`last-thread?` atomically). The critical
section is entered at line \ref{csbegin}; it is exited at line \ref{csend1}, at
line \ref{csend2}, and at line \ref{csend3}. To achieve protection, first, on
lines 6--8, we define a helper channel, we define a special value, and we
initialise the helper channel to contain the special value. Next, we add
\clj`(<!! semaphore)` (acquire the permit) at the entry of the critical section,
while we add %
\clj`(>!! semaphore permit)` (release the permit) at each exit. As a result,
only one thread can run the critical section at a time.

Protection of the critical section already fixes most data races, but there is
one final issue remaining. To exemplify this, suppose that there are two
threads, Alice (the sender) and Bob (the receiver), that they initiate two pairs
of corresponding synchronous mock channel actions on an unbuffered mock channel
(no deadlock), and that calls of \clj`detect-deadlocks` are scheduled as follows:

\begin{itemize}
	\item First, Alice enters the critical section to initiate her first send. As
	Bob has not yet initiated his corresponding receive, Alice's call of
	\clj`do-alts` on line \ref{alts1} immediately returns \clj`[nil :default]`.
	
	\item Next, Alice increments \clj`n` to \clj`1`, detects that she is not last,
	and actually initiates her first send by calling \clj`do-alts` on line
	\ref{alts2}.
	
	\item Next, Alice exits the critical section on line \ref{csend1}.
	
	\item Next, Alice calls \clj`deref` and is suspended until her first send is
	completed.

\medbreak

	\item Next, Bob enters the critical section to initiate his first receive. As
	Alice has already initiated her corresponding send, Bob's call of \clj`do-alts`
	on line \ref{alts1} immediately initiates and completes the receive. As the
	mock channel is unbuffered, at the same time, Alice's send is completed as
	well.

	\item Next, Bob exits the critical section on line \ref{csend3}.
	
\medbreak
	
	\item Next, as Alice's mock channel action was completed, Alice resumes.
	However, \textit{she does not get to decrement \clj`i` yet}. Instead, Bob is
	scheduled to go in between. From this point onwards, failure is impending:
	neither Alice nor Bob is suspended, but \clj`i` equals \clj`1`.
	
\medbreak
	
	\item Next, Bob re-enters the critical section to initiate his second send. As
	Alice has not yet initiated her corresponding receive, Bob's call of
	\clj`do-alts` on line \ref{alts1} immediately returns \clj`[nil :default]`.
	
	\item Next, Bob increments \clj`n` to \clj`2`, detects that he is last, and
	throws an exception.
\end{itemize}

\noindent However, the throw of an exception is a mistake: there is no deadlock,
as Alice is already resumed. In general, the root of the problem is that when
one thread $A$ completes a mock channel action, as a result, a mock channel
action of another thread $B$ may be completed as well, causing $B$ to resume;
however, the value of \clj`i` does not reflect the resumption yet, so if $A$
races to initiate another mock channel action, it might observe the obsolete
value of \clj`i`.

To address this issue, additional synchronisation between $A$ and $B$ is needed
to ensure that $A$ exits the critical section only after $B$ has decremented
\clj`i`. To achieve this, we uncomment lines
\ref{shadowbegin1}--\ref{shadowend1}, line \ref{shadow}, line
\ref{shadowbegin2}, and lines \ref{shadowbegin3}--\ref{shadowend3}:

\begin{itemize}
	\item On lines \ref{shadowbegin1}--\ref{shadowend1}, we define an initially
	empty map of unbuffered \textit{barrier channels} and two functions to
	de/populate it. Function \clj`install-barrier` associates a new barrier channel
	with a vector of the channels that occur in \clj`mock-acts`, while function
	\clj`uninstall-barrier` disassociates an existing barrier channel that has
	\clj`ch` in its image.
	
	\item On line \ref{shadow}, right before the \clj`mock-acts` are initiated, the
	map is populated.
	
	\item On line \ref{shadowbegin2}, right after one of the \clj`mock-acts` is
	completed, the associated barrier channel is used for synchronisation (i.e.,
	send an arbitrary value).
	
	\item On lines \ref{shadowbegin3}--\ref{shadowend3}, right after one of the
	\clj`mock-acts` is completed, the associated barrier channel (if any) is used
	for synchronisation (i.e., receive an arbitrary value), and the map is
	depopulated.
\end{itemize}

\noindent Thus, right before a thread $A$ exits the critical section on line
\ref{csend3}, after completing a mock channel action on \clj`ch`, it checks if a
\clj`barrier` is associated with \clj`ch`. If so, then it must be the case that:
\textbf{(1)} a thread $B$ has previously initiated a mock channel action on
\clj`ch`, but at that time, it was disabled, so $B$ associated \clj`barrier`
with \clj`ch`; \textbf{(2)} $B$ was suspended; \textbf{(3)} because $A$ has now
completed a corresponding mock channel action on \clj`ch`, $B$'s mock channel
action on \clj`ch` is now enabled; \textbf{(3)} $B$ is resumed. Thus, $A$ needs
to wait until $B$ has decremented \clj`i` by using the \clj`barrier`.
Reciprocally, $B$ also uses the \clj`barrier` after decrementing \clj`i`. As a
result, when $A$ exits the critical section after synchronising with $B$ through
\clj`barrier`, \clj`i` has the right value.

We note that we assume that $A$ and $B$ perform corresponding mock channel
actions on \clj`ch`. Satisfaction of this assumption can also be checked at
run-time: we just need to make sure that if $A$ and $B$ are different threads,
then they never both send to, or both receive from, the same channel. This is
straightforward to check using Discourje (because we already know the intended
sender and the intended receiver of each channel, as demonstrated at the bottom
of \autoref{fig:twobuyer}), and it should always be the case.

We are now in a position to prove the main correctness results. We begin by
stating some basic facts about the algorithm.

\begin{proposition}
	At most one thread is in the critical section at a time.
\end{proposition}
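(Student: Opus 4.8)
The plan is to prove mutual exclusion by maintaining a single global invariant about the state of the channel-based lock and establishing it by induction over the algorithm's execution. I would introduce two quantities over reachable runtime states: $p$, the number of permits currently buffered in \clj`semaphore`, and $c$, the number of threads whose control lies strictly between the acquire at line~\ref{csbegin} and one of the three releases at lines~\ref{csend1}, \ref{csend2}, and \ref{csend3} --- that is, the number of threads ``in the critical section''. The invariant to carry is $p + c = 1$, together with $p \in \{0,1\}$ (because \clj`semaphore` has capacity~$1$) and $c \ge 0$. The proposition then follows immediately: since $p \ge 0$, the invariant forces $c \le 1$, so at most one thread is in the critical section at any time.

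For the base case, I would observe that before any thread runs \clj`detect-deadlocks`, the initialisation \clj`(>!! semaphore permit)` (line~8) has deposited exactly one permit and no thread has yet entered, so $p = 1$ and $c = 0$. For the inductive step, I would argue that the only transitions able to change $p$ or $c$ are the channel operations on \clj`semaphore`. An acquire \clj`(<!! semaphore)` is a receive, hence enabled only when the buffer is non-empty, i.e.\ $p = 1$; on completion it moves the thread into the critical section, decrementing $p$ to $0$ and incrementing $c$, which preserves $p + c = 1$. A release \clj`(>!! semaphore permit)` is a send on the capacity-$1$ channel, enabled only when the buffer is non-full; since the invariant guarantees that a thread reaching a release has $c \ge 1$ and hence $p = 0$, the send succeeds, raising $p$ to $1$ and lowering $c$, again preserving the invariant. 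Every reachable state therefore satisfies $p + c = 1$.

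The main obstacle is not the arithmetic but making explicit the two semantic prerequisites the argument relies on. First, I must state as an assumption that a capacity-$1$ core.async channel behaves as a single-slot buffer with atomic sends and receives, so that $p$ can never exceed $1$ and the acquire of the lone permit cannot be observed by two threads at once. Second, I must justify that $c$ is well defined and changes only at lines~\ref{csbegin} and \ref{csend1}--\ref{csend3}, which amounts to a control-flow well-formedness check on \clj`detect-deadlocks`: every path from the acquire reaches exactly one of the three releases, and no path releases the permit without having first acquired it or releases it twice. Once these two points are pinned down, the induction closes and yields $c \le 1$ throughout.
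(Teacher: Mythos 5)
Your proof is correct and is essentially the paper's own argument made formal: the paper states this proposition without proof, relying on the prose justification that the capacity-one channel \clj`semaphore`, acquired by a receive at the entry of the critical section and released by a send at each of the three exits, acts as a lock, and your permit-counting invariant $p + c = 1$ is precisely a rigorous rendering of that reasoning. The two prerequisites you flag do hold of the code in the appendix --- core.async buffered channel operations are atomic, and every path through \clj`detect-deadlocks` brackets the critical section with exactly one acquire and one release (including the deadlock branch, which releases the permit before throwing) --- so the induction closes as you describe.
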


\begin{proposition}
	A thread can become about to be suspended only in the critical section.
\end{proposition}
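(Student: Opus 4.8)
The plan is to establish this by a direct control-flow inspection of \clj`detect-deadlocks` in \autoref{fig:algo}, after first fixing the operational meaning of the phrase. By the description in \autoref{sect:details}, a thread \emph{becomes about to be suspended} exactly when the inlined test of \clj`about-to-be-suspended?` reports that every one of its mock channel actions is currently disabled. In \autoref{fig:algo} this determination is a single event: the call of \clj`do-alts` on line~\ref{alts1}, configured with \clj`{:default nil}` so that it returns immediately without blocking, yields \clj`[nil :default]`, making the guard \clj`(= ch :default)` true. I would stress that this is the moment of \emph{determination}, distinct from the thread's \emph{actual} suspension at the subsequent \clj`(deref p)`, which occurs only after the critical section has been exited at line~\ref{csend1}; the proposition concerns the former. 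So the claim reduces to showing that the guard on line~\ref{alts1} can evaluate to true only while the calling thread holds the critical section.

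First I would trace the control flow into line~\ref{alts1}. The sole entry point of the algorithm is the top of \clj`detect-deadlocks`, whose first executed statement is \clj`(<!! semaphore)` on line~\ref{csbegin}; this blocks until the permit is acquired and, by definition, marks entry into the critical section. On every path, line~\ref{alts1} lies strictly after this acquisition, and no release of the permit occurs before it: the three occurrences of \clj`(>!! semaphore permit)` sit at lines~\ref{csend1}, \ref{csend2}, and~\ref{csend3}, each reached only after line~\ref{alts1}. Hence whenever a thread evaluates the \clj`:default` guard---and therefore whenever it becomes about to be suspended---it is inside the critical section.

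I would then close the argument by noting that line~\ref{alts1} is the \emph{only} place where a thread can make this determination: the body of \clj`about-to-be-suspended?` is inlined exactly once, and \clj`detect-deadlocks` is the single gateway through which any channel action passes before a thread may block. Combined with the preceding proposition that at most one thread occupies the critical section at a time, this yields the stated result.

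The main obstacle I anticipate is definitional rather than technical: pinning down ``about to be suspended'' precisely enough that it corresponds to one identifiable program point, and justifying that the \clj`{:default nil}` configuration genuinely lets the thread observe the disabledness of its mock actions \emph{synchronously}, without itself blocking---so that the observation is forced to happen inside the critical section rather than asynchronously after the permit has already been released. Once that correspondence is secured, the remainder is a routine reading of the control flow.
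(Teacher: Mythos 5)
Your proof is correct. The paper in fact states this proposition \emph{without} proof---it is offered as one of three ``basic facts about the algorithm,'' justified implicitly by inspection of \autoref{fig:algo}---and your control-flow argument supplies exactly that inspection: the determination that all mock actions are disabled is the \texttt{do-alts} call on line~\ref{alts1} returning \texttt{[nil :default]}, and on every execution path that call lies strictly after the permit acquisition at line~\ref{csbegin} and strictly before each of the three releases at lines~\ref{csend1}, \ref{csend2}, and~\ref{csend3}. Two minor remarks: your closing appeal to the mutual-exclusion proposition is superfluous, since the claim follows from control flow alone regardless of how many threads could hold the permit; and your choice to locate ``becoming about to be suspended'' at the \texttt{[nil :default]} return (rather than at the actual block on \texttt{deref}, which happens outside the critical section) is the right reading---it matches how the paper uses the notion in \autoref{lem:1} and \autoref{thm:1}, where a thread counts as about to be suspended once it has made this determination, whether it subsequently suspends or throws.
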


\begin{proposition}
	A thread can be resumed only when another thread is in the critical section and
	completes a mock channel action by calling \clj`do-alts` on line \ref{alts1}.
\end{proposition}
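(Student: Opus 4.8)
The plan is to reason backwards from a resumption event to the channel operation that caused it, and to show that this operation can only be a completion on line~\ref{alts1}. First I would pin down the two situations using the terminology established above. A thread $B$ is suspended precisely when the mock channel actions it initiated were all disabled: by the control flow of \autoref{fig:algo}, this means $B$ took the \clj`[nil :default]` branch at line~\ref{alts1}, found itself not to be last, issued its actions asynchronously via \clj`do-alts` at line~\ref{alts2}, and is now blocked at \clj`(deref p)`. Consequently $B$ is resumed exactly when one of those pending mock actions, say $\alpha$ on a mock channel \clj`ch`, completes.

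Next I would observe that $\alpha$ can complete only as the effect of a channel operation performed by another thread: for an unbuffered \clj`ch`, $\alpha$ completes when a corresponding action rendezvous with it; for a buffered \clj`ch`, $\alpha$ completes when the fullness\slash emptiness of \clj`ch` changes so as to enable it. Since $B$ is itself blocked at \clj`(deref p)` throughout, it performs no operation in the meantime, so the triggering operation is performed by some thread $A \neq B$. Moreover every channel operation in the algorithm is a \clj`do-alts` call, and both such calls (lines~\ref{alts1} and~\ref{alts2}) lie inside the critical section; by the first of the preceding propositions (at most one thread in the critical section at a time), the thread performing the triggering operation is alone there, and that thread is $A$.

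It then remains to exclude line~\ref{alts2} as the trigger, which I expect to be the main obstacle. A thread reaches line~\ref{alts2} only when all of its own actions returned \clj`[nil :default]` at line~\ref{alts1}, i.e.\ were disabled at that instant, and line~\ref{alts2} merely initiates these already-disabled actions asynchronously. I would argue, by the two clauses of the enabledness definition, that initiating a disabled action cannot complete $\alpha$: on an unbuffered channel a disabled action has no pending counterpart to rendezvous with (in particular, were $A$'s action the counterpart of $B$'s pending $\alpha$, then $\alpha$ would already have rendered it enabled, contradicting that $A$ reached line~\ref{alts2}); on a buffered channel a disabled send leaves a full buffer unchanged and a disabled receive leaves an empty buffer unchanged, so the fullness\slash emptiness relevant to $\alpha$ is untouched. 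Hence the operation completing $\alpha$ must be the completion of an \emph{enabled} action by $A$ at line~\ref{alts1}, which is exactly the statement.

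The delicate point to make airtight is that when $A$ completes its own enabled action at line~\ref{alts1}---a rendezvous on an unbuffered channel, or the freeing\slash filling of a buffer slot on a buffered channel---this really is what completes $B$'s pending $\alpha$, and that no resumption can leak through the asynchronous initiation at line~\ref{alts2}. The mutual exclusivity of \emph{full} and \emph{empty} for capacity-${\geq}\,1$ channels, combined with the fact that any action reaching line~\ref{alts2} was already observed disabled at line~\ref{alts1}, is what rules out the stray buffered case and closes the argument.
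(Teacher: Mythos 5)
Your proof is correct; note, however, that the paper itself never proves this statement --- it is listed (together with the mutual-exclusion and about-to-be-suspended propositions) as a ``basic fact'' about the algorithm and then used in the proof of the critical-section invariant --- so your argument supplies a justification the paper leaves implicit rather than taking a different route from an existing proof. Your decomposition is the natural one and matches how the proposition is used later: suspension is identified with being blocked at \texttt{(deref p)} with pending mock actions initiated on line \ref{alts2}; resumption is the completion of one of those pending actions; completions can only be triggered by operations on the mock channels; the only operations on mock channels are the two \texttt{do-alts} calls, both inside the critical section, so by the mutual-exclusion proposition the triggering thread is alone there; and line \ref{alts2} is excluded as a trigger because the actions it initiates were observed disabled on line \ref{alts1}, nothing can change their status in between (again by mutual exclusion), and initiating a disabled action changes no channel state --- your rendezvous contradiction for the unbuffered case and unchanged-buffer argument for the buffered case are both sound. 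The one claim you should repair is that ``every channel operation in the algorithm is a \texttt{do-alts} call'': this is false as stated, since the algorithm also performs sends and receives on the semaphore channel and on the barrier channels, and the monitored program performs actions on the real channels. The conclusion survives because all of those channels are disjoint from the mock channels, hence operations on them cannot enable or complete a pending mock action --- but that disjointness is precisely what a careful reader will question, so it deserves an explicit sentence rather than being absorbed into ``every channel operation''.
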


We now formulate the crucial \textit{critical section invariant}.

\begin{lemma}[critical section invariant]\label{lem:1}
	If the number of threads that are, or are about to be, suspended is exactly
	\clj`i` before executing the critical section, then this property also holds
	after, unless an exception is thrown.
\end{lemma}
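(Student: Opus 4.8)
The plan is to prove the invariant by a case analysis on the single \clj`do-alts` call on line \ref{alts1}, which determines through which of the three exits (lines \ref{csend1}, \ref{csend2}, \ref{csend3}) the critical section is left. Throughout, fix the thread $A$ executing the critical section, let $k$ be the number of threads that are, or are about to be, suspended just before $A$ enters, and assume \clj`i` $=k$. The first move is to pin down \emph{which} quantities can change during $A$'s pass through the critical section. By the three preceding propositions: $A$ is alone in the critical section, so the only increment of \clj`i` that can occur is $A$'s own \clj`(swap! i inc)`; no other thread can become about to be suspended; and a thread can be resumed (and thus later execute \clj`(swap! i dec)`) only as a consequence of $A$ completing a mock channel action on line \ref{alts1}. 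Hence it suffices to match, in each case, the net change in the actual count against the net change in \clj`i`.

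First case: line \ref{alts1} returns \clj`[nil :default]`, i.e.\ all of $A$'s mock actions are disabled, so no thread is resumed and \clj`i` is incremented to $k+1$. If $k+1 < n$ (exit \ref{csend1}), then $A$ initiates its mock actions on line \ref{alts2} and leaves the critical section, after which it will block on \clj`(deref p)`; thus $A$ itself becomes about to be suspended, raising the count to $k+1$, which matches \clj`i`, and the invariant is preserved. If $k+1 \ge n$ (exit \ref{csend2}), an exception is thrown, which the statement explicitly exempts. (One checks that the initiation on line \ref{alts2} completes nothing and resumes no thread, since line \ref{alts1} already found every mock action disabled and the critical section is atomic, so the same actions are still disabled at line \ref{alts2}.)

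Second case: line \ref{alts1} immediately completes a mock action on a channel \clj`ch` (exit \ref{csend3}). Here $A$ is not suspended and does not touch \clj`i` itself, so the only possible change to the count is the resumption of a counterpart thread $B$ whose corresponding action on \clj`ch` this completion enables. If no such $B$ exists, neither the count nor \clj`i` changes and the invariant holds trivially. If $B$ is resumed, then $B$ leaves the suspended set---lowering the count to $k-1$---and, upon resuming, $B$ executes \clj`(swap! i dec)`, lowering \clj`i` to $k-1$; so after $A$ exits, the count equals \clj`i`, \emph{provided} $A$ does not leave the critical section before $B$'s decrement has taken effect.

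This proviso is the crux of the argument and the step I expect to be the main obstacle, since it is exactly the data race analysed informally before the lemma. The plan is to discharge it via the barrier channels: when $B$ suspended (through the first case, exit \ref{csend1}) it installed a barrier keyed on the channels of its mock actions on line \ref{shadow}; since by the run-time-checked assumption $A$ and $B$ act on the same \clj`ch` with distinct roles, $A$'s \clj`uninstall-barrier` on line \ref{shadowbegin3} recovers precisely this barrier and blocks on it; and $B$, upon resuming, signals the barrier (line \ref{shadowbegin2}) only \emph{after} its \clj`(swap! i dec)`. The happens-before edge through the unbuffered barrier channel therefore forces $B$'s decrement to precede $A$'s synchronisation, hence $A$'s exit at line \ref{csend3}, establishing the proviso. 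Dually, when $A$'s completion resumes no thread there is no matching barrier, so the \clj`if-let` falls through and $A$ exits without waiting, consistent with the count and \clj`i` both being unchanged. The only genuine subtlety beyond bookkeeping is confirming that at most one such $B$ is resumed, which follows from the same assumption that no two distinct threads both send to, or both receive from, \clj`ch`.
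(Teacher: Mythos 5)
Your proof is correct and takes essentially the same approach as the paper's: the same case split on the outcome of the first \clj`do-alts` call, the same use of the three preceding propositions to rule out interference while $A$ is in the critical section, and the same appeal to the barrier channels to ensure that the resumed thread $B$'s decrement of \clj`i` precedes $A$'s exit. The difference is one of detail rather than route---you make explicit (the happens-before edge through the unbuffered barrier, the uniqueness of the resumed thread $B$, and the check that the second \clj`do-alts` call completes nothing) what the paper's proof leaves implicit.
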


\begin{proof}
	Suppose that thread $A$ enters the critical section. There are two cases.
	
	If none of the \clj`mock-acts` are enabled, then $A$ will increment \clj`i`,
	initiate the \clj`mock-acts`, and exit the critical section; at this point, $A$
	is about to be suspended (when it subsequently calls \clj`deref`). By the
	previous propositions, no other thread can become about to be suspended, or be
	resumed, while $A$ is in the critical section (but threads that are already
	about to be suspended can be suspended), so when $A$ exits it, the incremented
	\clj`i` properly reflects that $A$ is about to be suspended, in addition to the
	threads that already were so upon entry.
	
	If at least one of the \clj`mock-acts` is enabled, $A$ will immediately
	initiate and complete it. If another thread $B$ can be resumed because of this,
	then $A$ will synchronise with $B$ after $B$ has decremented \clj`i` before
	exiting the critical section. Along the same lines as in the previous case,
	\clj`i` properly reflects either that no additional threads are (about to be)
	suspended\slash resumed (if there is no $B$), or that $B$ has been resumed,
	when $A$ exits the critical section. \qed
\end{proof}

Trivially, the first time the critical section is entered, the number of
suspended threads is exactly \clj`i` (namely \clj`0`). Thus, inductively, the
critical section invariant implies that always when the critical section is
entered, the number of (about to be) suspended threads is exactly \clj`i`. We
now formulate the main theorem.

\begin{theorem}\label{thm:1}
	An exception is thrown if, and only if, there is a deadlock.
\end{theorem}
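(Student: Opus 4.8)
The plan is to derive the biconditional from a single bookkeeping fact and then connect that fact to the declarative notion of deadlock. Write $i$ for the private counter \clj`i` and $n$ for the total number of threads. By inspection of the code, an exception is thrown on exactly one control-flow path: a thread $A$ enters the critical section, its \clj`do-alts` on line~\ref{alts1} returns the default (so all of $A$'s mock actions are disabled and $A$ is about to be suspended), $A$ executes \clj`(swap! i inc)`, and the resulting value is not less than $n$. Since each visit increments $i$ by one and a thread throws rather than continuing once the result reaches $n$, the value of $i$ never exceeds $n$; hence the throw fires precisely when an increment takes $i$ from $n-1$ to $n$.

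The crux is the inductive consequence of the critical section invariant recorded after \autoreflem{lem:1}: whenever the critical section is entered, $i$ equals the number of threads that are, or are about to be, suspended. Applying this at the increment, the post-increment value of $i$ equals the number of (about to be) suspended threads \emph{including} $A$ itself. Combined with the previous paragraph this yields the pivotal equivalence: an exception is thrown if and only if some reachable configuration has all $n$ threads (about to be) suspended. It then remains to show that such a configuration coincides with a deadlock.

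For soundness, suppose $A$ throws. Then the other $n-1$ threads are already (about to be) suspended and $A$'s own actions are all disabled. No progress is possible: by the third proposition preceding \autoreflem{lem:1}, a suspended thread resumes only when some other thread, from inside the critical section, completes a mock channel action; but by the first such proposition $A$ is the only thread in the critical section, and $A$ completes nothing before throwing. Hence no disabled mock action can become enabled, the configuration is a genuine total deadlock on the mock channels, and ---since the mock channels faithfully mirror the enabledness of the real channels--- on the real channels as well. For completeness, suppose a total deadlock arises, i.e.\ a configuration in which every thread is suspended and no progress is possible. Because a thread becomes about to be suspended only inside the critical section (second preceding proposition) and entries are serialised (first preceding proposition), consider the last thread $A$ whose becoming-about-to-be-suspended precedes the onset of the deadlock; at that instant the remaining $n-1$ threads are already suspended, so by the invariant $A$ enters with $i = n-1$. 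In the deadlock none of $A$'s actions can be enabled ---otherwise $A$ would complete one on line~\ref{alts1} and would not be suspended--- so $A$ increments $i$ to $n$ and throws.

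I expect completeness to be the main obstacle. Its delicate points are (i) that $i$ must track the (about to be) suspended population \emph{exactly} at the deadlock-forming step, even though threads may be resumed and re-suspended many times beforehand ---this is precisely what \autoreflem{lem:1}, together with the barrier synchronisation motivated just before it, is designed to guarantee--- and (ii) that a total deadlock cannot form ``silently,'' i.e.\ its final thread genuinely reaches the critical-section check with disabled actions rather than slipping past it, which reduces to the faithful-mirroring property of mock channels. Once these are secured, both directions follow from the equivalence of the second paragraph, so the real work lies in justifying that the count recorded by $i$ is accurate at the single decisive transition.
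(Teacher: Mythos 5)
Your proof is correct and follows essentially the same route as the paper's: both directions hinge on the critical section invariant (\autoreflem{lem:1}) and its inductive consequence that \clj`i` counts the (about to be) suspended threads at each entry of the critical section, together with the three propositions that serialise entry and localise suspension/resumption. Your treatment is somewhat more explicit than the paper's---notably the ``last thread'' identification in the completeness direction and the argument that no mock action can become enabled after a throw---but these are refinements of the same argument, not a different decomposition.
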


\begin{proof}
	\strut
	\begin{itemize}
		\item[($\Leftarrow$)] Suppose that there is a deadlock. Then, all threads are
		suspended. Then, due to the critical section invariant, \clj`i` must be
		\clj`n`. Then, there must have been a thread that incremented \clj`i` to
		\clj`n`. Then, that thread threw an exception.
	
		\item[($\Rightarrow$)] Suppose that an exception is thrown by a thread.
		Then, \clj`i` was incremented to \clj`n` by that thread. Then, \clj`do-alts`
		on line \ref{alts1} returned \clj`[nil :default]`. Then, none of the
		\clj`mock-acts` were enabled, and as a result, the thread would have been
		suspended. Furthermore, because \clj`i` was incremented to \clj`n` during the
		critical section, \clj`i` must have been \clj`n-1` before the critical
		section. Then, by the critical section invariant, the number of threads that
		were, or were about to be, suspended upon entry of the critical section was
		\clj`n-1`. As the current thread would have become about to be suspended as
		well, \clj`n` threads would have been suspended, or about to become so, upon
		exit of the critical section. Thus, there is a deadlock. \qed
	\end{itemize}
\end{proof}

\fi

\end{document}